\newtheorem{theorem}{Theorem}[section]
\newtheorem{assumption}[theorem]{Assumption}
\newtheorem{definition}[theorem]{Definition}
\newtheorem{remark}[theorem]{Remark}
\newtheorem{problem}[theorem]{Problem}
\newcommand{\Ze}{{\mathbb Z}}
\newcommand{\R}{{\mathbb{R}}}
\newcommand{\N}{{\mathbb{N}}}
\newcommand{\segcc}[1]{{{\left\llbracket#1\right\rrbracket}}}
\tikzstyle{robot} = [circle, text centered, draw=black]
\tikzstyle{module} = [rectangle, minimum width=2cm, minimum height=1cm, text centered, text width=2cm, draw=black]
\tikzstyle{arrow} = [thick,->,>=stealth]
\title{\LARGE \bf
Barrier Function-based Distributed Symbolic Controller for Multi-Agent Systems
}
\author{David Smith Sundarsingh$^{1,*}$, Ratnangshu Das$^{1,*}$, Adnane Saoud$^2$, and Pushpak Jagtap$^1$ 
\thanks{This work was supported in part by the Google Research Grant, the SERB Start-up Research Grant, and the CSR Grants by Siemens and Nokia. }
\thanks{$^*$The authors contributed equally.}
\thanks{$^1$D. S. Sundarsingh, R. Das, and P. Jagtap are with the Robert Bosch Centre for Cyber-Physical Systems, IISc, Bangalore, India {\tt\small \{davids,ratnangshud,pushpak\}@iisc.ac.in}}%
\thanks{$^2$A. Saoud is with the College of Computing, University Mohammed VI Polytechnic, Benguerir, Morocco {\tt\small adnane.saoud@um6p.ma}}
}
\newcolumntype{L}{>{\RaggedRight\hangafter=1\hangindent=1em}X}
\begin{document}

\maketitle
\thispagestyle{empty}
\pagestyle{empty}

\begin{abstract}
Because of the scalability issues associated with the symbolic controller synthesis approach, employing it in a multi-agent system (MAS) framework becomes difficult. 
In this paper, we present a novel approach for synthesizing distributed symbolic controllers for MAS, that enforces a local Linear Temporal Logic (LTL) specification on each agent and global safety specifications on the MAS, in a computationally efficient manner by leveraging the concept of control barrier functions (CBF).
In addition, we also provide an analysis on the effect of the CBF parameters on the conservatism introduced by our proposed approach in the size and domain of the synthesized controller. 
The effectiveness of this approach is demonstrated through a comparison with the conventional monolithic symbolic control, using simulation as well as hardware demonstrations.

\end{abstract}

\section{Introduction}
{In recent years, MAS has emerged as a prominent field of research, encompassing various domains such as robotics, automation, and artificial intelligence. These systems consist of multiple autonomous agents coordinating their actions to achieve common objectives. When subjected to complex specifications, ensuring that each agent accomplishes its local task while adhering to global tasks is crucial. Consider the example where a swarm of autonomous drones is deployed to perform a search and rescue operation. Each drone is assigned a specific local task, like scanning designated regions or delivering essential supplies. However, it is imperative to ensure that the drones operate safely and do not pose a risk to each other or the people on the ground. In this case, the global specifications would include rules and constraints that govern the behavior of the drone swarm to prevent collisions, maintain desired distances, and avoid hazardous situations. The behavior and performance of MAS critically depend on the design and implementation of their controllers. Such complex tasks can efficiently be represented using linear temporal logic (LTL) \cite{tabuada2009verification}.
To synthesize controller enforcing LTL specifications, the symbolic control approach \cite{tabuada2009verification,ReissigWeberRungger17} is widely used.
However, these approaches often face challenges related to computational complexity, scalability, and the ability to handle large-scale MAS effectively, resulting in the issue of the so-called curse of dimensionality.}

Controller synthesis for MAS is usually done in two ways - top-down and bottom-up approaches. The top-down approach involves the decomposition of global tasks into local ones. For example, \cite{GAMARA} and \cite{autoSynth} decompose global controllers, and each agent solves for the decomposed strategy. The authors in \cite{MDP} modified this approach by only solving for a sub-group of the MAS, which has mutual specifications. \cite{Auction} and \cite{swarmreactive} assign tasks to each agent based on a global specification. These approaches assume that the global specification is decomposable. 
On the other hand, bottom-up approaches solve for local specifications with some constraints \cite{co-opCoupled} or synthesize controllers again after composing the system from individually controlled agents \cite{MITL,saoud2021compositional}. While these methods can solve local and global specifications, they require some assumptions on the system \cite{MITL,co-opCoupled} or on the specifications \cite{saoud2021compositional}. 
A combination of the two approaches \cite{decoupled}, decomposes the system, solves the local specification, and removes any conflicts after recomposing the system. This process is repeated iteratively with no guarantee on the convergence of the solution.

Controller synthesis for MAS using CBFs has gained significant attention in recent research \cite{multi-barrier}, especially for enforcing a fragment of signal temporal logic (STL) specifications.
The authors in \cite{safetyOnly} used barrier functions to synthesize a least-violating controller for collision avoidance. Additionally, the integration of CBFs within symbolic control techniques has also been explored. In \cite{acASR+barrier} and \cite{barrierabstract}, the controller computes a discrete plan and employs barrier functions to ensure safe transitions. 
Furthermore, \cite{verifiable} builds upon the work in \cite{safetyOnly} by employing a nominal controller synthesized in a centralized manner and projecting it onto each agent to fulfil LTL specifications in addition to collision avoidance. However, since this approach uses centralized controller synthesis, it encounters scalability challenges.

{To address these limitations, we presented a novel bottom-up symbolic controller synthesis approach for MAS by leveraging the concept of barrier certificates. 
The controller synthesis is done in three steps: synthesizing local controllers for individual agents using symbolic control techniques, composing these controllers using barrier certificates for safety specification, and then synthesizing a controller for the resulting multi-agent system {to ensure both local LTL specifications and global safety specification.}
While the proposed approach significantly enhances computational efficiency compared to conventional centralized techniques, it introduces a degree of conservatism in terms of the size and domain of the controller. The preliminary results are presented in \cite{sundarsingh2023safeMulti}. In this paper, we build on the work by providing formal guarantees and a thorough analysis on the effect of the barrier certificate parameters on conservatism and computational efforts. Furthermore, a comparative study between the controller synthesized by our approach and the conventional monolithic controller synthesis illustrates the significant reduction in the computation time.
Finally, we conduct hardware studies to showcase the practicality of the proposed approach using a MAS setup with local reach-avoid and global safety specifications. 

\section{Preliminaries and Problem Definition}
\textbf{Notations:}
For $x \in \mathbb{R}^n$, $x_q$ represents the $q^{th}$ element of the vector $x$, where $q\in\{1,\dots,n\}$ and the infinity norm of $x$ is $\lVert x\rVert:=\mathbf{max}_{q\in\{1,\dots n\}}|x_{q}|$. For $a,b\in(\R\cup\{-\infty, \infty\})^n$, where $a\leq b$ component-wise, the closed hyper-interval is denoted by $\segcc{a,b}:=\R^n\cap([a_1,b_1]\times\dots\times[a_n,b_n])$. A relation $R\subseteq A\times B$ can be defined as a map $R:A\rightarrow 2^B$ as follows: $b\in R(a)$ iff $(a,b)\in R$.  The relation $R$ is strict if $R(a)\neq\emptyset$, $\forall a\in A$. The inverse of the relation is defined as $R^{-1}:=\{(b,a)\in B\times A|(a,b)\in R\}$ and can be written as $a\in R^{-1}(b)$. Given a set $S$, $Int(S)$ and $\partial S$ represent the interior and the boundary of $S$, respectively. 
Consider $N$ sets $A_i$, $i\in\{1,\ldots,N\}$, the Cartesian product of sets is given by $A=\prod_{i\in\{1,\ldots,N\}}A_i:=\{(a_1,\ldots,a_N)|a_i\in A_i,i\in\{1,\ldots,N\}\}$. Given $N$ functions $f_i:X_i\rightarrow A_i$, the Cartesian product of functions is $f:X\rightarrow A:=\prod_{i\in\{1,\ldots,N\}}f_i=(f_1(x_1),\ldots, f_N(x_N))$.
The composition of two maps $H$ and $R$ is $H\circ R(x):=H(R(x))$. A function $\alpha: \R_0^+ \rightarrow \R_0^+$ is of class $\mathcal{K}$ if $\alpha(0)=0$, and it is strictly increasing. If $\alpha \in \mathcal{K}$ is unbounded, it is of class $\mathcal{K}_{\infty}$.

\subsection{Discrete-time Multi-Agent Systems}

{Consider a collection of $N\in \N$ agents and let $I=\{1,\dots,N\}$. Each agent's state evolution is given by the following discrete-time control system: 
\begin{align}
\mathsf{x}_i(k+1)= \mathsf{f}_i(\mathsf{x}_i(k),\mathsf{u}_i(k)),\text{ } i\in I,\text{ }k\in\N_0,
\label{sys}
\end{align}
where ${\mathsf{x}_i}(k)\in X_i\subset \R^{n_i}$ 
is the state of the $i^{th}$ agent and ${\mathsf{u}_i}(k)\in U_i\subset\R^{m_i}$ is the input to the agent.\\
The evolution of the multi-agent system (MAS) is given by:
\begin{align}
{\mathsf{x}}(k+1)= {\mathsf{f}(\mathsf{x}}(k),{\mathsf{u}}(k)),\text{ }k\in\N_0, \text{ }x(0) \in X^0,
\label{sysMAS}
\end{align}
where $X^0$ is the set of initial states, $\mathsf{x}(k)\in X:=\prod_{i\in I}X_i\subset\R^n$ is the state of the multi-agent system, $n=\Sigma_{i\in I}n_i$, $\mathsf{u}(k)\in U:=\prod_{i\in I}U_i\subset\R^m$ is the input to the system and $m=\Sigma_{i\in I}m_i$. The function $\mathsf{f}$ is given by ${\mathsf{f}}:X\times U\rightarrow X$ and ${\mathsf{f(x}}(k),{\mathsf{u}}(k)):=\prod_{i\in I}{\mathsf{f_i(x_i}}(k),{\mathsf{u_i}}(k))$.
The trajectory of system \eqref{sysMAS} starting from a state $x\in X^0$ under the input signal $\mathsf{u}$ is given by $\mathsf{x}_{x\mathsf{u}}$ and $\mathsf{x}_{x\mathsf{u}}(k)$ gives the value of the state of the system {along that trajectory} at sampling instance $k$.


 The reachable set of system \eqref{sysMAS} from a set $\mathcal{X} \subseteq X$ 
under an input $u\in U$ 
is given by $Reach(\mathcal{X},u):=\bigcup_{x\in \mathcal{X}}\mathsf{f}(x,u)$, 
which is the set of all states that the system "reaches" when an input $u$ is applied at all the states in $\mathcal{X}$ in a one-time step. 
This reachable set is difficult to compute, so we use the over-approximated reachable set, $\overline{Reach}(x,u)$. Several approaches are available in the literature for computing this over-approximated set; for example, \cite{reachable1,reachIntro} and \cite{ReissigWeberRungger17}. 

\subsection{Transition Systems}
We now introduce the notion of transition systems \cite{tabuada2009verification}
which will serve as a unified representation for discrete-time control systems and their corresponding symbolic models. 

\begin{definition}
\label{Def:trans_sys}
A transition system is a tuple $\Sigma=(X,X^0,U,F)$, where 
$X$ is the set of states, $X^0\subseteq X$ is the set of initial states, $U$ is the set of inputs, and the map $F:X\times U\rightrightarrows X$ is the transition relation. 
\end{definition}
The set of admissible inputs for $x\in X$ is denoted by { $U^a(x):=\left\{u \in U \mid F\left(x, u\right) \neq \emptyset\right\}$}. We use $x^{\prime} \in F\left(x, u\right)$ to represent the $u$-successor of state $x$.\\
Consider the discrete-time control system of agent $i$ as given in \eqref{sys}. The transition system representation of the $i^{th}$ agent is given by the tuple $\Sigma_i=(X_i,X^0_i,U_i,F_i)$, where $X_i\subset\R^{n_i}$ is the set of the states of agent $i$, $X_i^0\subseteq X_i$ is the set of initial states, $U_i\subset\R^{m_i}$ is the set of inputs for agent $i$ and for $x_i\in X_i$, $u_i\in U_i$, $F_i(x_i,u_i):=\mathsf{f}_i(x_i,u_i)$.\\
The composition of the $N$ transition systems, which represents the multi-agent system \eqref{sysMAS} is given by definition below.
\begin{definition}\label{DefComposition}
		Given a collection of $N\in\N$ agents represented as  $\left\{\Sigma_{i}\right\}_{i \in I}$, where
        $I=\{1,\dots,N\}$, the composed transition system is $\Sigma=\left(X,X^0,U,F\right)$, where
		\begin{itemize}\raggedright
			\item $X=\prod_{i \in I} X_{i}$, $X^0\subseteq X$, $U=\prod_{i \in I} U_{i}$,
			\item for $x\in X$ and $u\in U$, $F(x,u)=\prod_{i\in I}F_{i}\left(x_{i}, {u}_{i}\right)$.
		\end{itemize}
\end{definition}
 
\subsection{Problem Formulation}

Next, we formally state the problem considered here:

\begin{problem}\label{problem}
Given $N$ agents with dynamics as in \eqref{sys}, local specifications $\psi_i$ expressed as LTL formulae \cite{LTLcite}
for each agent and global safety specification $\Phi$ over the MAS \eqref{sysMAS}; design a controller for the MAS that enforces global safety specification and the local LTL specifications.
\end{problem}

One can solve Problem \ref{problem} for a MAS monolithically using symbolic control techniques (as discussed in Section III) for satisfying the global and local specifications $\varphi=\Psi\wedge\Phi$, where $\Psi = \bigwedge_{i\in I}\psi_i$,  but scalability and computational efficiency are stringent issues for controller synthesis.

To deal with the scalability issue and, as a consequence, to reduce computational complexity, we propose to use a bottom-up approach consisting of three steps:
\begin{enumerate}
    \item We construct a symbolic controller for each of the $N$ agents $\Sigma_i$ to satisfy the corresponding local LTL specification $\psi_i$, where $i \in \{1,\ldots,N\}$, as explained in Section \ref{symbolCont}. Using the symbolic controller, we construct the individual controlled agents.
    \item For the composed controlled MAS, resulting from the composition of the controlled agents obtained in step $(1)$, we construct a barrier function \cite{barrierfunc} that enforces the global safety specification $\Phi$. We use a notion of barrier functions (introduced in Section \ref{barriersection}) as certificates to remove transitions that violate the safety specification in the composed system.
    \item Since the local specifications may be violated due to the safety-enforcing barrier functions, we synthesize a controller for the controlled system obtained in step $(2)$ to achieve the specification $\Psi = \bigwedge_{i\in I}\psi_i$.
\end{enumerate}
The proposed controller ensures that the MAS satisfies the local specification $\Psi$ and the global specification $\Phi$. 

\section{Symbolic Control}\label{symbolCont}
In this section, we will briefly discuss symbolic control and how it can be used to synthesize controllers to satisfy a given specification.
In order to relate the discrete-time control system and its symbolic model, we use the notion of feedback refinement relation \cite{ReissigWeberRungger17}.
\subsection{Feedback Refinement Relation}
\begin{definition}\label{FRRdef}
Consider two transition systems $\Sigma=(X,X^0,U,F)$ and $\hat{\Sigma}=(\hat{X},\hat{X}^0,\hat{U},\hat{F})$. A strict relation $Q\subseteq X \times \hat{X}$ is said to be a feedback refinement relation from $\Sigma$ to $\hat{\Sigma}$, denoted by $\Sigma\preceq_Q \hat{\Sigma}$, if for each $(x, \hat{x}) \in Q$ the following conditions hold:
\begin{itemize}
    \item $\hat{U}^a(\hat{x}) \subseteq U^a(x)$,
    \item $u \in \hat{U}^a(\hat{x})\implies Q(F(x, u))\subseteq \hat{F}(\hat{x}, u)$.
\end{itemize}
\end{definition}
The feedback refinement relation $Q$ allows to transform a controller for the symbolic model $\hat{\Sigma}$ into a controller for the original system $\Sigma$ {as any input available for $\hat{\Sigma}$ is available for $\Sigma$ and the transitions of $\hat{\Sigma}$ preserve behavioural relations with the corresponding transitions in $\Sigma$.}
\subsection{Construction of Symbolic Models}\label{symconst}

In order to synthesize controllers for the concrete system $\Sigma$, we need to construct its symbolic model $\hat{\Sigma}$, which is related to $\Sigma$ through the feedback refinement relation.

\begin{definition}\label{Def:SymbolicModel}
The symbolic model of the system $\Sigma$ is given by $\hat{\Sigma}=(\hat{X},\hat{X}^{0},\hat{U},\hat{F})$, where
\begin{itemize}
    \item $\hat{X}$ is a cover over $X$ whose elements are closed hyper-intervals called cells. Let $\bar{\hat{X}}\subseteq\hat{X}$ be a compact set of congruent hyper-rectangles aligned on a uniform grid parameterized with a quantization parameter $\eta\in(\R^{+})^n$. Each $\hat{x}\in\bar{\hat{X}}$ is given by $c_{\hat{x}}+\segcc{-\frac{\eta}{2},\frac{\eta}{2}}$, where $c_{\hat{x}}\in\eta\Ze^n$ and $\eta\Ze^n=\{c\in\R^n|\exists_{l\in\Ze^n}\forall_{q\in\{1,\dots,n\}}c_q=l_q\eta_q\}$. The cells in $\hat{X}\setminus\bar{\hat{X}}$ are called overflow cells,
    \item $\hat{X}^0=\hat{X}\cap X^0$ and $\hat{U}$ is a finite subset of $U$,
    \item for $\hat{x}\in\bar{\hat{X}}$ and $\hat{u}\in\hat{U}$, a set $A:=\{\hat{x}'\in\hat{X}|\hat{x}'\cap\overline{Reach}(\hat{x},\hat{u})\neq\emptyset$\}. If $A\subseteq\bar{\hat{X}}$ and $\hat{x}'\not\in\hat{X}\setminus\bar{\hat{X}}$, $\forall \hat{x}'\in A$ then $\hat{F}(\hat{x},\hat{u})=A$.
\end{itemize}
\end{definition}
For a detailed procedure on constructing the symbolic model, refer \cite{rungger2016scots}.
\begin{theorem}{\cite[Theorem VIII.4]{ReissigWeberRungger17}}\label{FRRproof}
If $\hat{\Sigma}$ is the symbolic model of $\Sigma$ constructed according to Definition \ref{Def:SymbolicModel} then the relation $Q\subseteq X\times\hat{X}$ defined by $Q=\{(x,\hat{x}) \in X\times \hat{X}: x \in \hat{x}\}$ is feedback refinement relation from $\Sigma$ to $\hat{\Sigma}$.
\end{theorem}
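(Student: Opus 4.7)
The plan is to verify the three ingredients of Definition \ref{FRRdef} (strictness of $Q$, admissible-input inclusion, and successor inclusion) directly by unpacking the construction in Definition \ref{Def:SymbolicModel}. Strictness is immediate: since $\hat{X}$ is by construction a cover of $X$, every concrete state $x \in X$ lies in at least one cell $\hat{x} \in \hat{X}$, so $Q(x) \neq \emptyset$ for all $x \in X$.

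For the admissibility condition $\hat{U}^a(\hat{x}) \subseteq U^a(x)$ at any $(x,\hat{x}) \in Q$, I would exploit the fact that the concrete transition relation comes from the single-valued map $\mathsf{f}$ of system \eqref{sysMAS}, so $F(x,u) = \{\mathsf{f}(x,u)\}$ is non-empty for every $u \in U$. Combined with $\hat{U} \subseteq U$ from the definition of the symbolic model, this yields $\hat{U}^a(\hat{x}) \subseteq \hat{U} \subseteq U = U^a(x)$.

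The substantive step is the successor inclusion $Q(F(x,u)) \subseteq \hat{F}(\hat{x},u)$ whenever $(x,\hat{x}) \in Q$ and $u \in \hat{U}^a(\hat{x})$. The key observation is that $x \in \hat{x}$ implies $\mathsf{f}(x,u) \in Reach(\hat{x},u) \subseteq \overline{Reach}(\hat{x},u)$. Therefore, any cell $\hat{x}'$ with $\mathsf{f}(x,u) \in \hat{x}'$, i.e., any $\hat{x}' \in Q(F(x,u))$, automatically satisfies $\hat{x}' \cap \overline{Reach}(\hat{x},u) \neq \emptyset$ and hence lies in the set $A$ of Definition \ref{Def:SymbolicModel}. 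Admissibility of $u$ at $\hat{x}$ is precisely the assertion that the qualifying conditions in the last bullet of Definition \ref{Def:SymbolicModel} hold and $\hat{F}(\hat{x},u)=A$, so every element of $Q(F(x,u))$ lies in $\hat{F}(\hat{x},u)$.

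The only subtle point, and the main bookkeeping obstacle, is reading the admissibility hypothesis correctly in the presence of overflow cells: the set $\hat{F}(\hat{x},u)$ is declared to equal $A$ only when $A \subseteq \bar{\hat{X}}$ and no element of $A$ is an overflow cell; otherwise $\hat{F}(\hat{x},u)$ is empty and $u \notin \hat{U}^a(\hat{x})$. Once this caveat is unpacked, the inclusion follows directly from the over-approximation property $Reach(\hat{x},u) \subseteq \overline{Reach}(\hat{x},u)$, and no further ingredients from the literature are needed.
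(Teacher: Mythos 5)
Your proof is correct. The paper does not prove this statement itself --- it imports it as Theorem VIII.4 of \cite{ReissigWeberRungger17} --- and your direct verification of the three ingredients (strictness from the cover property, $\hat{U}^a(\hat{x}) \subseteq U^a(x)$ from totality of $\mathsf{f}$ together with $\hat{U}\subseteq U$, and successor inclusion from $\mathsf{f}(x,u)\in Reach(\hat{x},u)\subseteq\overline{Reach}(\hat{x},u)$ combined with the convention that $u\in\hat{U}^a(\hat{x})$ forces $\hat{F}(\hat{x},u)=A$) is exactly the standard argument behind that cited result. You correctly identified the one place where care is needed, namely that non-admissibility absorbs both the overflow-target case and, implicitly, overflow source cells (for which $\hat{F}$ is undefined, hence empty, making the two conditions vacuous); nothing further is missing.
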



\subsection{Controller Synthesis using Symbolic Models}
Consider the transition system $\Sigma=(X,X^0,U,F)$ and a memoryless controller $C:X\rightrightarrows U$, where for all $x\in X$, $C(x)\subseteq U^a(x)$. Let the domain of the controller be $dom(C):=\{x\in X|C(x)\neq\emptyset\}$.

\begin{definition}\label{controlledsys}
Given a controller $C$ and a transition system $\Sigma$, the controlled system is given by the tuple $\Sigma|C=(X_{C},X^0_C,U_C,F_C)$, where
\begin{itemize}
    \item $X_C=X\cap dom(C)$, $X^0_C\subseteq X_C$, $U_C=U$,
    \item for $x_C\in X_C$ and $u_C\in U_C$, $x'_C\in F_C(x_C,u_C)$ iff $x'_C\in F(x_C,u_C)$ and $u_C\in C(x_C)$.
\end{itemize}
\end{definition}

Let $P$ be the set of atomic propositions that labels the states in $X\subset \R^n$ through a labelling function $\mathcal{L}:X\rightarrow 2^P$ and $\varphi$ be an LTL specification over $P$. 
The control inputs $\mathbf{\mathsf{u}}$ applied to the system according to $C$ generates a trajectory $\mathsf{x}_{x\mathsf{u}}$ from the initial state $x$. We say that the system $\Sigma|C\models\varphi$ if $\mathcal{L}(\mathsf{x}_{x\mathsf{u}})\models\varphi$. For more information on specification satisfaction, refer \cite[Definition VI.1]{ReissigWeberRungger17}.

Given the symbolic model $\hat{\Sigma}$ and the relation $Q\subseteq X\times\hat{X}$, 
we first synthesize a controller $\hat{C}$ such that $\hat{\Sigma}|\hat{C}\models\hat{\varphi}$ using graph theoretical approaches \cite{tabuada2009verification}, 
where $\hat{\varphi}$ is the symbolic specification associated with $\Sigma$, $\hat{\Sigma}$, $Q$ and $\varphi$ such that, if $\mathcal{L}(\hat{x})\subseteq\hat{\varphi}$ 
and $(x,\hat{x})\in Q$, then $\mathcal{L}(x)\subseteq\varphi$.

\begin{theorem}{\cite[Theorem VI.3]{ReissigWeberRungger17}}\label{control}
If $\Sigma\preceq_Q\hat{\Sigma}$ and $\hat{C}$ is the symbolic controller such that $\hat{\Sigma}|\hat{C}\models\hat{\varphi}$, then $\Sigma|C\models\varphi$, where $C:=\hat{C}\circ Q$.
\end{theorem}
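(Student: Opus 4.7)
The plan is to show that every trajectory of the controlled concrete system $\Sigma|C$ has a matching trajectory of the controlled symbolic system $\hat\Sigma|\hat C$ through the relation $Q$, and then transfer satisfaction of $\hat\varphi$ back to $\varphi$ using the labelling compatibility. First I would fix an arbitrary initial state $x(0)\in X^0_C$, an input signal $\mathsf u$ consistent with $C$, and the resulting trajectory $\mathsf x_{x\mathsf u}$ of $\Sigma|C$. By construction of $C=\hat C\circ Q$, for each $k$ there exists $\hat x(k)\in Q(x(k))$ with $\mathsf u(k)\in\hat C(\hat x(k))$; I would define a candidate symbolic trajectory $\hat{\mathsf x}$ inductively by choosing $\hat x(0)\in Q(x(0))$ (nonempty by strictness of $Q$) and, at each step, picking $\hat x(k+1)\in Q(x(k+1))\cap \hat F(\hat x(k),\mathsf u(k))$.

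The key technical step is to verify that this inductive choice is always possible, and this is exactly where the two conditions of the feedback refinement relation (Definition \ref{FRRdef}) are used. Since $\mathsf u(k)\in\hat C(\hat x(k))\subseteq\hat U^a(\hat x(k))$, the first condition gives $\mathsf u(k)\in U^a(x(k))$, so the input is indeed admissible for the concrete system and $\mathsf x_{x\mathsf u}(k+1)\in F(x(k),\mathsf u(k))$ is well defined. The second condition then yields $Q(F(x(k),\mathsf u(k)))\subseteq \hat F(\hat x(k),\mathsf u(k))$, so the set $Q(x(k+1))\cap \hat F(\hat x(k),\mathsf u(k))$ is nonempty, and the inductive construction goes through. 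The produced $\hat{\mathsf x}$ is a legitimate trajectory of $\hat\Sigma|\hat C$ starting in $\hat X^0$, because $\hat x(0)\in Q(x(0))\subseteq\hat X^0$ by the cover/initial-set compatibility in Definition \ref{Def:SymbolicModel}.

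Next I would transfer specification satisfaction. By hypothesis $\hat\Sigma|\hat C\models\hat\varphi$, so $\mathcal L(\hat{\mathsf x})\models\hat\varphi$, which in the symbolic setting means $\mathcal L(\hat x(k))\subseteq\hat\varphi$ for each $k$. Combining this with the compatibility property of $\hat\varphi$ stated just before the theorem, namely that $\mathcal L(\hat x)\subseteq\hat\varphi$ together with $(x,\hat x)\in Q$ forces $\mathcal L(x)\subseteq\varphi$, I obtain $\mathcal L(\mathsf x_{x\mathsf u}(k))\subseteq\varphi$ for all $k$, hence $\mathcal L(\mathsf x_{x\mathsf u})\models\varphi$. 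Since $x(0)$ and $\mathsf u$ were arbitrary, $\Sigma|C\models\varphi$.

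The main obstacle I anticipate is not any single deep step but the careful bookkeeping in the inductive construction: one must ensure simultaneously that $C$ is well-defined (its outputs lie in $U^a$), that the constructed $\hat{\mathsf x}$ is actually generated by $\hat C$ rather than being an arbitrary $Q$-lift, and that the labelling translation from $\hat\varphi$ to $\varphi$ is applied pointwise along the whole trajectory. Handling $\omega$-regular/LTL satisfaction (as opposed to finite-horizon reachability) is essentially automatic once the trajectory-level correspondence is established, since the property refers only to the labels along the sequence.
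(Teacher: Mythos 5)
The paper offers no proof of this statement; it is imported verbatim from \cite[Theorem VI.3]{ReissigWeberRungger17}, so there is no in-paper argument to compare against. Your proposal is the standard refinement argument for feedback refinement relations and is essentially correct, but one step of the bookkeeping must be tightened. In the inductive construction you pick $\hat x(k+1)\in Q(x(k+1))\cap\hat F(\hat x(k),\mathsf u(k))$ and justify only that this intersection is nonempty; an arbitrary element of that intersection need not satisfy $\mathsf u(k+1)\in\hat C(\hat x(k+1))$, which is what you need for $\hat{\mathsf x}$ to be a trajectory of $\hat\Sigma|\hat C$ rather than merely of $\hat\Sigma$. The repair is that the second condition of Definition \ref{FRRdef} gives the stronger containment $Q(x(k+1))\subseteq Q(F(x(k),\mathsf u(k)))\subseteq\hat F(\hat x(k),\mathsf u(k))$, i.e., \emph{every} $Q$-lift of $x(k+1)$ is a $\mathsf u(k)$-successor of $\hat x(k)$; hence you are free to take $\hat x(k+1)$ to be precisely the witness of $\mathsf u(k+1)\in C(x(k+1))=\hat C(Q(x(k+1)))$, and the induction closes. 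You flag exactly this concern in your last paragraph but do not resolve it, so the proof as written has a hole at that point, albeit an easily filled one. A second, minor remark: reading $\mathcal L(\hat{\mathsf x})\models\hat\varphi$ as the pointwise condition $\mathcal L(\hat x(k))\subseteq\hat\varphi$ for every $k$ is faithful only for invariance-type properties; for general LTL the transfer should be phrased as: satisfaction depends only on the label sequence, and the labelling compatibility stated before the theorem holds at every position of the matched pair of trajectories, which you do acknowledge at the end. With these two clarifications your argument is complete and coincides with the proof in the cited reference.
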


The controller $\hat{C}$, synthesized for symbolic model $\hat{\Sigma}$ satisfying global and local specifications, defined in Problem \ref{problem}, can be refined for the concrete system $\Sigma$ with the help of relation $Q$. The controlled system $\Sigma|C\models\varphi$, where $\varphi:=\Psi\wedge\Phi$ is the combination of local and global specifications. Many toolboxes are available for symbolic controller synthesis, for example, \cite{rungger2016scots,omegaThread} and \cite{QUEST}.
For scalability, we first synthesize a controller that satisfies the local specification $\psi_i$ for the $i^{th}$ local symbolic model ($i^{th}$ agent) with $\Psi = \bigwedge_{i\in I}\psi_i$ and enforce global safety specification $\Phi$ on the composed symbolic model with the notion of control barrier functions, discussed in the next section.



\section{Barrier Certificate for Symbolic Models}\label{barriersection}
To deal with global safety specification $\Phi$, we leverage the concept of control barrier function \cite{discreteCBF}. Consider a discrete-time system as defined in \eqref{sysMAS}.
Let the safe set $S\subseteq X\subset \R^n$ be defined as the superlevel set of a continuous function  $B:\R^n\rightarrow\R$ and is given by: 
\begin{align}
    S &= \{x\in X|B(x)\geq 0\}, \label{safeset1}\\
    Int(S) &= \{x\in X|B(x)>0\}, \label{safeset2}\\
    \partial S &=\{x\in X|B(x)=0\}. \label{safeset3}
\end{align}
\begin{definition}
    Given the transition system $\Sigma=(X,X^0,U,F)$ in Definition \ref{Def:trans_sys} and a set $S \subseteq X$. The set $S$ is said to be controlled invariant for the transition system $\Sigma$ if for all $x \in S$, there exists $u \in U$ satisfying $F(x,u) \subseteq S$.
\end{definition}
\begin{theorem}{\cite[Lemma 1]{discreteCBF}}\label{barrierproof}
{Given the discrete-time control system \eqref{sysMAS} and a safe set $S\subseteq X \subset \R^n$ as defined in \eqref{safeset1}-\eqref{safeset3} as a superlevel set of a continuous function $B:\mathbb{R}^n \rightarrow \mathbb{R}$. The set $S$ is controlled invariant for the system (\ref{sysMAS}) if there exists a $\mathcal{K}_{\infty}$ map $\alpha$, satisfying $\alpha(r) < r$, for all $r > 0$ and such that the following holds: for all
$ x\in X\text{, there exists } u\in U$ satisfying
\begin{align}\label{barriertransition}
    B(\mathbf{\mathsf{f}}(x,u))-B(x)\geq -\alpha(B(x)).
\end{align}\label{barrierinvar}
}
\end{theorem}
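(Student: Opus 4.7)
My plan is to show that starting from any state $x$ in the safe set $S$, the hypothesis delivers an input whose successor also lies in $S$, thereby establishing controlled invariance directly from its definition. Fix $x \in S$, so that $B(x) \geq 0$ by \eqref{safeset1}. The hypothesis supplies some $u \in U$ satisfying $B(\mathsf{f}(x,u)) - B(x) \geq -\alpha(B(x))$, which rearranges to $B(\mathsf{f}(x,u)) \geq B(x) - \alpha(B(x))$. Because the concrete dynamics \eqref{sysMAS} are single-valued, $F(x,u) = \{\mathsf{f}(x,u)\}$, so it suffices to verify $B(\mathsf{f}(x,u)) \geq 0$.

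I would then split the verification into two cases based on whether $x$ lies in $Int(S)$ or on $\partial S$. If $x \in Int(S)$, then $B(x) > 0$, so the assumption $\alpha(r) < r$ for all $r > 0$ yields $B(x) - \alpha(B(x)) > 0$, whence $B(\mathsf{f}(x,u)) > 0$ and $\mathsf{f}(x,u) \in Int(S) \subseteq S$. If $x \in \partial S$, then $B(x) = 0$, and since $\alpha \in \mathcal{K}_\infty$ satisfies $\alpha(0) = 0$ by the definition of class $\mathcal{K}$, the bound reduces to $B(\mathsf{f}(x,u)) \geq 0$, placing the successor in $S$.

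The main subtlety lies in the boundary case: the strict inequality $\alpha(r) < r$ is assumed only for $r > 0$ and cannot be invoked at $r = 0$. The boundary is handled instead by the class-$\mathcal{K}_\infty$ property $\alpha(0) = 0$, which is precisely what prevents the successor from dropping below the zero level set of $B$. No additional machinery is needed, and the conclusion that $S$ is controlled invariant follows immediately from the definition.
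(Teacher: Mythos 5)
Your proof is correct. Note that the paper itself offers no proof of this statement---it is imported verbatim from Lemma 1 of the cited reference---so there is no in-paper argument to compare against; yours is the standard one-step argument. The substantive point, which you handle properly, is that for every $x\in S$ the chosen input gives $B(\mathsf{f}(x,u))\ge B(x)-\alpha(B(x))\ge 0$: strictly positive on $Int(S)$ because $\alpha(r)<r$ for $r>0$, and nonnegative on $\partial S$ because $\alpha(0)=0$; this is exactly the nonnegativity of $(I_d-\alpha)$ on $[0,\infty)$ that the authors later invoke (via their case-free claim that $I_d-\alpha$ is of class $\mathcal{K}_\infty$) when proving invariance of the symbolic safe set in Theorem~\ref{mainth}, and your explicit case split on $Int(S)$ versus $\partial S$ is if anything the cleaner way to justify it.
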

To solve Problem \ref{problem} in a scalable way, we first define a symbolic safe set $\hat S$ that is compatible with the symbolic model using barrier function $B$ \eqref{safeset1}-\eqref{barriertransition} defining safe set $S$ for the original system \eqref{sysMAS}. For this, we need the following assumption over function $B:X\rightarrow\R$. 

\begin{assumption}\label{A1}
The barrier functions $B:X\rightarrow\R$ defined in Theorem \ref{barrierproof} satisfy the global Lipschitz continuity condition: there exists a constant {$\mathcal{L}^x\in\R^+_0$} such that $\|B(x)-B(y)\|\leq {\mathcal{L}^x}\|x-y\|$ for all $x,y\in X$. 
\end{assumption}


Given Assumption \ref{A1} and a symbolic model $\hat{\Sigma}=(\hat{X},\hat{X}^{0},\hat{U},\hat{F})$ with a symbolic state given by $\hat{x}:=c_{\hat{x}}+\segcc{-\frac{\eta}{2},\frac{\eta}{2}}\in\hat X$ and state space discretization $\eta=(\eta_1,\ldots,\eta_n)$ $\in(\R^+)^n$ as defined in Definition \ref{Def:SymbolicModel}, we define a symbolic safe set $\hat S$ using barrier function $B$ in Theorem \ref{barrierproof} as:
\begin{align}
    \hat S=\{\hat{x}\in\hat{X}|B(c_{\hat{x}})-\mathcal{L}^x\frac{\eta_{max}}{2}\geq0\}\label{safesymbol1},\\
    Int(\hat S)=\{\hat{x}\in\hat{X}|B(c_{\hat{x}})-\mathcal{L}^x\frac{\eta_{max}}{2}>0\}\label{safesymbol2},\\
    \partial \hat S=\{\hat{x}\in\hat{X}|B(c_{\hat{x}})-\mathcal{L}^x\frac{\eta_{max}}{2}=0\},\label{safesymbol3}
\end{align}
where $\eta_{max}=\mathbf{max}_{j\in\{1,\dots,n\}}\eta_j$.

\begin{theorem}\label{mainth}
Consider a system $\Sigma=(X,X^0,U,F)$, its symbolic model $\hat{\Sigma}=(\hat{X},\hat{X}^0,\hat{U},\hat{F})$ constructed with relation $Q\subseteq X\times\hat{X}$ and state space quantization $\eta\in(\R^+)^n$ as defined in Definition \ref{Def:SymbolicModel}, a safe set $S$ as defined in \eqref{safeset1}-\eqref{safeset3}, and the symbolic safe set $\hat S$ as defined in \eqref{safesymbol1}-\eqref{safesymbol3}. If for all ${\hat{x}}\in \hat S$, there exists $\hat{u}\in\hat{U}^a(\hat{x})$ such that 
$$\mathbf{min_{\hat{x}'\in \hat{F}(\hat{x},\hat{u})}}[B(c_{\hat{x}'})-B(c_{\hat{x}})]\geq -\alpha (B(c_{\hat{x}})-\mathcal{L}^x\frac{\eta_{max}}{2}),$$
where $B:X\rightarrow\R$ and $\alpha\in\mathcal{K}_{\infty}$ are defined in Theorem \ref{barrierproof}, then {$Q^{-1}(\hat S)\subset S$} and $\hat{S}$ is invariant for system $\hat\Sigma$. 

\end{theorem}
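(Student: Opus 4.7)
The plan is to establish the two conclusions separately, since they rely on different parts of the hypothesis. For the containment $Q^{-1}(\hat{S}) \subset S$, I would exploit the definition of $Q$ together with Assumption \ref{A1}. For the invariance of $\hat{S}$ under $\hat{\Sigma}$, I would unwind the inequality in the theorem statement and use the defining property $\alpha(r) < r$ of the class-$\mathcal{K}_\infty$ map.

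First, I would prove $Q^{-1}(\hat{S}) \subset S$. Take any $x \in Q^{-1}(\hat{x})$ for some $\hat{x} \in \hat{S}$. By Theorem \ref{FRRproof}, $(x,\hat{x}) \in Q$ means $x \in \hat{x} = c_{\hat{x}} + \segcc{-\frac{\eta}{2},\frac{\eta}{2}}$, so $\|x - c_{\hat{x}}\| \leq \eta_{\max}/2$ in the infinity norm. Applying the Lipschitz bound of Assumption \ref{A1} gives $B(x) \geq B(c_{\hat{x}}) - \mathcal{L}^x \eta_{\max}/2$, and since $\hat{x} \in \hat{S}$ the right-hand side is non-negative by \eqref{safesymbol1}. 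Hence $B(x) \geq 0$, which places $x$ in $S$ by \eqref{safeset1}.

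Next, for invariance of $\hat{S}$, I would fix $\hat{x} \in \hat{S}$, pick the $\hat{u} \in \hat{U}^a(\hat{x})$ granted by the hypothesis, and show $\hat{F}(\hat{x},\hat{u}) \subseteq \hat{S}$. Setting $r := B(c_{\hat{x}}) - \mathcal{L}^x \eta_{\max}/2 \geq 0$, the hypothesis gives, for every $\hat{x}' \in \hat{F}(\hat{x},\hat{u})$,
\begin{equation*}
B(c_{\hat{x}'}) \geq B(c_{\hat{x}}) - \alpha(r) = r + \mathcal{L}^x \tfrac{\eta_{\max}}{2} - \alpha(r).
\end{equation*}
To conclude $\hat{x}' \in \hat{S}$, I need $B(c_{\hat{x}'}) \geq \mathcal{L}^x \eta_{\max}/2$, which reduces to showing $r - \alpha(r) \geq 0$. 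This is immediate from the assumed property $\alpha(r) < r$ for $r > 0$ together with $\alpha(0) = 0$.

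I expect no serious obstacle here; the bookkeeping of the offset $\mathcal{L}^x \eta_{\max}/2$ is the only subtle step. The main idea is that the symbolic safe set $\hat{S}$ is intentionally defined to be a \emph{shrunk} version of $S$ (tightened by the Lipschitz-times-grid margin), which simultaneously guarantees soundness of the refinement (Part 1) and absorbs the discretization error introduced when evaluating the barrier at cell centers instead of true states (Part 2). The class-$\mathcal{K}_\infty$ condition $\alpha(r) < r$ is exactly what ensures the barrier decrement condition at the center propagates to invariance of the tightened set.
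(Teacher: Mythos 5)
Your proposal is correct and follows essentially the same route as the paper's proof: the same Lipschitz-plus-grid-radius bound $B(x)\geq B(c_{\hat{x}})-\mathcal{L}^x\frac{\eta_{max}}{2}$ for the containment $Q^{-1}(\hat S)\subset S$, and the same rearrangement of the decrement condition combined with $\alpha(r)<r$ (the paper phrases this as $(I_d-\alpha)$ applied to $B(c_{\hat{x}})-\mathcal{L}^x\frac{\eta_{max}}{2}\geq 0$, you phrase it as $r-\alpha(r)\geq 0$ with $r:=B(c_{\hat{x}})-\mathcal{L}^x\frac{\eta_{max}}{2}$) for the invariance of $\hat S$. Your substitution of $r$ is a slightly cleaner bookkeeping of the identical argument, and also correctly covers the boundary case $r=0$ via $\alpha(0)=0$.
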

\begin{proof} Let us first show that $Q^{-1}(\hat S) \subset S$.\\
 From Definition \ref{Def:SymbolicModel}, we have $\hat{x}:=c_{\hat{x}}+\segcc{\frac{-\eta}{2},\frac{\eta}{2}}\in\hat{X}$ and using the fact $\eta_{max}=\mathbf{max}_{j\in\{1,\dots,n\}}\eta_j$, one obtains for all $ (x,\hat{x})\in Q\text{, }\lVert c_{\hat{x}}-x \rVert \leq \frac{\eta_{max}}{2}.$ Using Lipschitz continuity of $B$, $\forall (x,\hat x)\in Q$, we get 
\begin{align*}
B(c_{\hat{x}})-B(x) \leq \lVert B(c_{\hat{x}})-B(x) \rVert \leq \mathcal{L}^x\lVert c_{\hat{x}}-x \rVert \leq \mathcal{L}^x \frac{\eta_{max}}{2}, 
\end{align*}
where $\mathcal{L}^x$ is the Lipschitz constant of the function $B$. Thus,
\begin{align}\label{proofeq}
\forall (x,\hat x)\in Q, \ B(c_{\hat{x}})-\mathcal{L}^x\frac{\eta_{max}}{2}\leq B(x).    
\end{align}
Thus, $B(c_{\hat{x}})-\mathcal{L}^x\frac{\eta_{max}}{2}\geq0\implies B(x)\geq0$, i.e., for all ${\hat{x}}\in \hat S$, $Q^{-1}(\hat{x})\subset S$. 
Thus we have {$Q^{-1}(\hat S)\subset S$}.\\ 
Now to show that $\hat S$ is invariant for $\hat{\Sigma}$, we have that for all $\hat{x}\in\hat S$ there exists $\hat{u}\in\hat{U}^a(\hat{x})$ such that 
\begin{align*}
    &\mathbf{min_{\hat{x}'\in \hat{F}(\hat{x},\hat{u})}}[B(c_{\hat{x}'})-B(c_{\hat{x}})]\\&=\mathbf{min_{\hat{x}'\in \hat{F}(\hat{x},\hat{u})}}[B(c_{\hat{x}'})-\mathcal{L}^x\frac{\eta_{max}}{2}-B(c_{\hat{x}})+\mathcal{L}^x\frac{\eta_{max}}{2}]\\&\geq -\alpha (B(c_{\hat{x}})-\mathcal{L}^x\frac{\eta_{max}}{2}).
\end{align*}
Thus one has for all $\hat{x}\in\hat S$ there exists $\hat{u}\in\hat{U}^a(\hat{x})$ such that
{\small
\begin{align*}
    &B(c_{\hat{x}'})-\mathcal{L}^x\frac{\eta_{max}}{2}-B(c_{\hat{x}})+\mathcal{L}^x\frac{\eta_{max}}{2}\geq -\alpha (B(c_{\hat{x}})-\mathcal{L}^x\frac{\eta_{max}}{2}),
    \\&
    \implies B(c_{\hat{x}'})-\mathcal{L}^x\frac{\eta_{max}}{2}\geq (I_d-\alpha)\circ (B(c_{\hat{x}})-\mathcal{L}^x\frac{\eta_{max}}{2}),
\end{align*}}%
for all ${\hat{x}'}\in\hat F(\hat x,\hat u)$.
Since $\alpha\in\mathcal{K}_{\infty}$ one has that $(I_d-\alpha)\in\mathcal{K}_{\infty}$ which implies from condition \eqref{safesymbol1} (i.e., $B(c_{\hat{x}})-\mathcal{L}^x\frac{\eta_{max}}{2}\geq 0$ for all $\hat x\in \hat 
S$) that for all $\hat x\in\hat S$ we have $B(c_{\hat{x}'})-\mathcal{L}^x\frac{\eta_{max}}{2}\geq 0\implies {\hat{x}'}\in \hat S$, $\forall \hat{x}'\in \hat{F}(\hat{x},\hat{u})$.
This proves the invariance of the set $\hat S$.
\end{proof}

\begin{remark}
Since we know that $Q^{-1}(\hat{S})\subset S$ (from Theorem \ref{mainth}) and with $\hat{S}$ as invariant, the system does not violate the safety specification $\Phi$ by staying inside $S$.

\end{remark}

\section{Scalable Controller Synthesis for Multi-Agent Systems}
\subsection{Controller Synthesis for Each Agent (Symbolic Model)}
Consider the problem of controller synthesis for each agent \eqref{sys} represented by the transition system $\Sigma_i=(X_i,X^0_i,U_i,F_i)$ given a local LTL specification $\psi_i$. Using symbolic control, we first construct the symbolic model of each agent $\Sigma_i$ given by $\hat{\Sigma}_i=(\hat X_i,\hat X_i^0,\hat U_i,\hat F_i)$ (as discussed in Section \ref{symconst}) such that $\Sigma_i\preceq_{Q_i}\hat{\Sigma}_i$, where $Q_i\subseteq X_i\times\hat{X}_i$ is the strict feedback refinement relation. We then synthesize a controller $\hat{C}_i$ such that $\hat{\Sigma}_i|\hat{C}_i\models\hat{\psi}_i$, where $\hat{\psi}_i$ is the symbolic specification for $\hat{\Sigma}_i$ (related to $\psi_i$, $\Sigma_i$ and $Q_i$). 
Theorem \ref{control} shows that we can refine the controller $\hat{C}_i$ using the feedback refinement relation $Q_i$ and the refined controller $C_i:=\hat{C}_i\circ Q_i$ is such that $\Sigma_i|C_i\models\psi_i$.

After controller synthesis, we obtain the controlled agents $\hat{\Sigma}_i|\hat{C}_i=(\hat X_{C_i},\hat X^0_{C_i},\hat U_{C_i},\hat F_{C_i})$, $i \in \{1,2,\ldots,N\}$, as shown in Definition \ref{controlledsys}, where $\hat{X}_{C_i}=\hat{X}_i\cap dom(\hat C_i)$, $\hat{X}^0_{C_i}\subseteq\hat{X}_{C_i}$, $\hat{U}_{C_i}=\hat{U}_i$ and for $\hat{x}\in\hat{X}_{C_i}$, $\hat{u}\in\hat{U}_{C_i}$, $\hat{x}'\in\hat{F}_{C_i}(\hat{x},\hat{u})$ iff $\hat{x}'\in\hat{F}_i(\hat{x},\hat{u})$ and $\hat{u}\in\hat{C}_i(\hat{x})$.

\subsection{Composition using Control Barrier Certificates}
We will now compose the individual symbolic models of the controlled systems of each agent.


Given a collection of $N\text{ }(\in\N)$ controlled systems where each controlled system is given by $\hat{\Sigma}_i|\hat{C}_i=(\hat X_{C_i},\hat X^0_{C_i},\hat U_{C_i},\hat F_{C_i})$ and $I=\{1,\dots,N\}$, the composed controlled system is given by $\hat{\Sigma}|\hat{C}=(\hat{X}_C,\hat{X}^0_C,\hat{U}_C,\hat{F}_C)$ constructed based on Definition \ref{DefComposition}, where $\hat{X}_C=\prod_{i\in I}\hat{X}_{C_i}$, $\hat{X}^0_{C}\subseteq\hat{X}_C$, $\hat{U}_C=\prod_{i\in I}\hat{U}_{C_i}$ and for $\hat{x}\in \hat{X}_C$ and $\hat{u}\in \hat{U}_C$, $\hat F_C(\hat{x},\hat{u})=\prod_{i\in I}\hat{F}_{C_i}(\hat{x},\hat{u})$.

\begin{definition}\label{safetycontroller}
Let $B:X\rightarrow \R$ be the CBF that enforces the safety specification $\Phi$. We construct the safety controller $\hat C_S$ for the system $\hat{\Sigma}|\hat{C}$ defined above, as follows:
\begin{itemize}
    \item $\hat C_S(\hat x)=\{\hat{u}\in\hat{U}^a_C(\hat{x})|\mathbf{min_{\hat{x}'\in \hat{F}(\hat{x},\hat{u})}}[B(c_{\hat{x}'})-B(c_{\hat{x}})]\geq -\alpha (B(c_{\hat{x}})-\mathcal{L}^x\frac{\eta_{max}}{2})\}$ and
    \item $dom(\hat{C}_S)\subseteq \hat{S}\cap\{\hat{x}\in\hat{X}_C|\hat{C}_S(\hat{x})\neq\emptyset\}$.
\end{itemize}
\end{definition}



We can now construct a controlled system $(\hat{\Sigma}|\hat{C})|\hat{C}_S=(\hat X_S,\hat X^0_S,\hat U_S,\hat F_S)$ as defined in Definition \ref{controlledsys}, where $\hat{X}_S=\hat{X}_C\cap dom(\hat C_S)$, $\hat{X}^0_S\subseteq\hat{X}_S$, $\hat{U}_S=\hat{U}_C$ and for $\hat{x}\in \hat{X}_S$ and $\hat{u}\in \hat{U}_S$, $\hat{x}'\in \hat{F}_S(\hat{x},\hat{u})$ iff $\hat{x}'\in\hat{F}_C(\hat{x},\hat{u})$ and $\hat{u}\in\hat{C}_S(\hat{x})$.

\begin{remark}
Note that only the transitions that lead back to the set $\hat S$ are included in the transition system $(\hat \Sigma|\hat C)|\hat C_S$. At some $\hat x\in \hat{X}_S$, $\hat U^a_S(\hat x)$ may be empty because there could be no inputs in $\prod_{i\in I}\hat U^a_{C_i}(\hat x_i)$ that brings the system to $\hat S$.
\end{remark}



 To restore the local specifications (violated due to the safety-enforcing barrier certificate), it is necessary to synthesize a controller $\hat C_B$ for the composed symbolic model's specification given by $\hat{\Psi}=\bigwedge_{i\in I}\hat{\psi}_i$.
 
{The following result shows that the combination of the controllers $ C_B$, $C_S$ and $C_i$, $i\in \{1,2,\ldots,N\}$, designed before makes it possible for the discrete-time control system in (\ref{sysMAS}) to satisfy the control objective defined in Problem \ref{problem}.}

\begin{theorem}
Given the controlled agents $\hat{\Sigma}_i|\hat{C}_i$ with $i\in I=\{1,\dots,N \}$, the strict relation $Q_i\subseteq X_i\times\hat{X}_i$, LTL specification $\Psi:=\bigwedge_{i\in I}\psi_i$, symbolic specification $\hat{\Psi}$ resulting from the concrete specification $\Psi$, the symbolic safe set $\hat{S}$ defined in \eqref{safesymbol1}-\eqref{safesymbol3} and the safe set $S$ defined in \eqref{safeset1}-\eqref{safeset3}, if $\hat{C}_B$ is a controller such that $((\hat \Sigma|\hat C)|\hat C_S)|\hat{C}_B\models\hat{\Psi}$ then, $((\Sigma|C)|C_S)|C_B\models\Psi$, where $C_B:=\hat{C}_B\circ Q$ and the trajectories of the controlled MAS $((\Sigma|C)|C_S)|C_B$ stays inside the safe set $S$.
\end{theorem}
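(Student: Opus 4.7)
The plan is to reduce the statement to two independent pieces: (i) satisfaction of the local LTL conjunction $\Psi$ via feedback refinement, and (ii) global safety via the barrier-based invariance result already proved in Theorem \ref{mainth}.

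First, I would establish that the product relation $Q := \prod_{i \in I} Q_i \subseteq X \times \hat X$ is a strict feedback refinement relation from the composed concrete system $\Sigma$ to the composed symbolic model $\hat\Sigma$. This should follow essentially by construction: by Definition \ref{DefComposition} both $F$ and $\hat F$ act coordinatewise, admissible inputs at a composed state are products of admissible inputs at the components, and each $Q_i$ already satisfies the two conditions of Definition \ref{FRRdef}. Strictness is inherited from strictness of the $Q_i$'s.

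Next I would lift this FRR through the three successive controller refinements. At each level the controlled transition system, as defined in Definition \ref{controlledsys}, only restricts the domain and the set of enabled inputs at each state while keeping the underlying transitions of the uncontrolled system. Consequently, if we define $C := \prod_i C_i = \prod_i \hat C_i\circ Q_i$, followed by $C_S := \hat C_S\circ Q$ and $C_B := \hat C_B\circ Q$, the FRR $Q$ restricts to an FRR between each pair of controlled systems: $\Sigma|C \preceq_Q \hat\Sigma|\hat C$, then $(\Sigma|C)|C_S \preceq_Q (\hat\Sigma|\hat C)|\hat C_S$, and finally $((\Sigma|C)|C_S)|C_B \preceq_Q ((\hat\Sigma|\hat C)|\hat C_S)|\hat C_B$. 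Once this chain of refinements is in place, Theorem \ref{control} applied to the outermost pair, together with the assumption that $((\hat\Sigma|\hat C)|\hat C_S)|\hat C_B \models \hat\Psi$, directly yields $((\Sigma|C)|C_S)|C_B \models \Psi$, where $\hat\Psi$ is the symbolic specification associated with $\Psi$ through $Q$.

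For the safety claim, I would rely on Definition \ref{safetycontroller} and Theorem \ref{mainth}. The construction of $\hat C_S$ ensures that at every $\hat x \in \hat S$ the admissible inputs are exactly those satisfying the discrete barrier inequality of Theorem \ref{mainth}, so $\hat S$ is invariant under $\hat C_S$, and hence under any further restriction by $\hat C_B$. Combined with $Q^{-1}(\hat S) \subset S$ (also from Theorem \ref{mainth}) and the FRR property $Q(F(x,u)) \subseteq \hat F(\hat x,u)$, any concrete trajectory of $((\Sigma|C)|C_S)|C_B$ must $Q$-track a symbolic trajectory staying in $\hat S$, and therefore remains in $S$.

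The main obstacle I anticipate is the bookkeeping in the second step: verifying that the FRR $Q$ is preserved under each successive restriction of the transition relation by the controllers, in particular that the composed symbolic controller $\hat C\cdot \hat C_S\cdot \hat C_B$ refined through $Q$ really does coincide with the composition $C\cdot C_S\cdot C_B$ used in the concrete controlled system, so that no spurious transitions appear on either side. Everything else is then a direct application of Theorems \ref{control} and \ref{mainth}.
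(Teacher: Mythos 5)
Your proposal is correct and follows essentially the same route as the paper's proof: compose the local feedback refinement relations into the product relation $Q$, propagate the refinement through the successive controller restrictions to obtain $((\Sigma|C)|C_S)|C_B\preceq_Q((\hat{\Sigma}|\hat{C})|\hat{C}_S)|\hat{C}_B$, invoke Theorem \ref{control} for $\Psi$, and use the invariance of $\hat{S}$ together with $Q^{-1}(\hat{S})\subset S$ from Theorem \ref{mainth} for safety. The bookkeeping step you flag as the main obstacle is exactly what the paper discharges by citing \cite[Theorem VI.3]{ReissigWeberRungger17} and \cite[Corollary VI.5]{ReissigWeberRungger17}, which guarantee that a feedback refinement relation is preserved under refinement of controllers of the form $\hat{C}\circ Q$.
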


\begin{proof}
From \cite[Theorem VI.3]{ReissigWeberRungger17} and \cite[Corollary VI.5]{ReissigWeberRungger17}, we know that
$
    \Sigma_i|C_i\preceq_{Q_i}\hat{\Sigma}_i|\hat{C}_i
$ since $\Sigma_i\preceq_{Q_i}\hat{\Sigma}_i$ by construction and $C_i:=\hat{C}_i\circ Q_i$.

By composing the controlled agents and since there is no coupling between the agents, one gets
$\Sigma|C\preceq_Q\hat{\Sigma}|\hat{C}$,
where $\Sigma$ is the MAS resulting from the composition of the agents $\Sigma_i$, where $i \in I$ and $I=\{1,\ldots,N\}$, $\hat{\Sigma}$ is the transition system resulting from the composition of the local symbolic models $\hat{\Sigma}_i$, $i \in I$. The set of states for the composed MAS and the composed local symbolic models are given by $X:=\prod_{i\in I}X_{i}$ and $\hat{X}:=\prod_{i\in I}\hat{X}_{i}$, respectively. The controller $C:X \rightrightarrows U$ is defined for $x=(x_1, \ldots, x_N) \in X$ as $u=(u_1, \ldots, u_N) \in C(x)$ if and only if $u_i \in C_i(x_i)$, for all $i \in I$. Similarly the symbolic controller $\hat{C}:\hat{X} \rightrightarrows \hat{U}$ is defined for $\hat{x}=(\hat{x}_1, \ldots, \hat{x}_N) \in \hat{X}$ as $\hat{u}=(\hat{u}_1, \ldots, \hat{u}_N) \in \hat{C}(\hat{x})$ if and only if $\hat{u}_i \in \hat{C}_i(\hat{x}_i)$, for all $i \in I$. The feedback refinement relation $Q\subseteq X\times\hat{X}$ is defined as $Q:=\{(x,\hat{x}) \in X\times\hat{X} \mid (x_i,\hat{x}_i) \in Q_i,\forall i \in I\}.$ 
We synthesize a safety controller $\hat C_S$ as given in Definition \ref{safetycontroller} for the system $\hat{\Sigma}|\hat{C}$. By construction, this controller ensures that the controlled system $(\hat{\Sigma}|\hat{C})|\hat{C}_S$ never leaves $\hat{S}$. The refined controller $C_S:=\hat{C}_S\circ Q$ is such that,
  $  (\Sigma|C)|C_S\preceq_Q(\hat{\Sigma}|\hat{C})|\hat{C}_S
$ since $\Sigma|C\preceq_Q\hat{\Sigma}|\hat{C}$.

We now synthesize a controller $\hat{C}_B$ such that $((\hat{\Sigma}|\hat{C})|\hat{C}_S)|\hat{C}_B\models\hat \Psi$ and since $(\Sigma|C)|C_S\preceq_Q(\hat{\Sigma}|\hat{C})|\hat{C}_S$, $C_B:=\hat{C}_B\circ Q$ is the refined controller such that $((\Sigma|C)|C_S)|C_B\models\Psi$ due to Theorem \ref{control}.

With the composed controlled system we have,
\begin{gather}\label{relation}
    ((\Sigma|C)|C_S)|C_B\preceq_Q((\hat{\Sigma}|\hat{C})|\hat{C}_S)|\hat{C}_B.
\end{gather}

Since $\hat{C}_B(\hat{x})\subseteq\hat{U}_S(\hat{x})$ and $dom(\hat C_B)\subseteq\hat{X}_S$, all trajectories of the system $((\hat{\Sigma}|\hat{C})|\hat{C}_S)|\hat{C}_B$ evolve within the set $\hat S$. From \eqref{relation}, it is clear that $((\Sigma|C)|C_S)|C_B$ will also remain in $\hat{S}$, which implies from Theorem \ref{mainth} that the system $((\Sigma|C)|C_S)|C_B$ always remains in $S$.
Hence, the combination of refined controllers $C$, $C_S$ and $C_B$ allows to satisfy the specifications defined in Problem \ref{problem}.
\end{proof}

\subsection{Conservativeness of the Proposed Approach}
The size and domain of the controller synthesized by the proposed approach are influenced by the class-$\mathcal{K}_\infty$ function $\alpha(r)$. The following theorem provides an analysis of the effect of the function $\alpha$ on the obtained controller.
\begin{figure}
    \centering
    \includegraphics[width=0.45\textwidth]{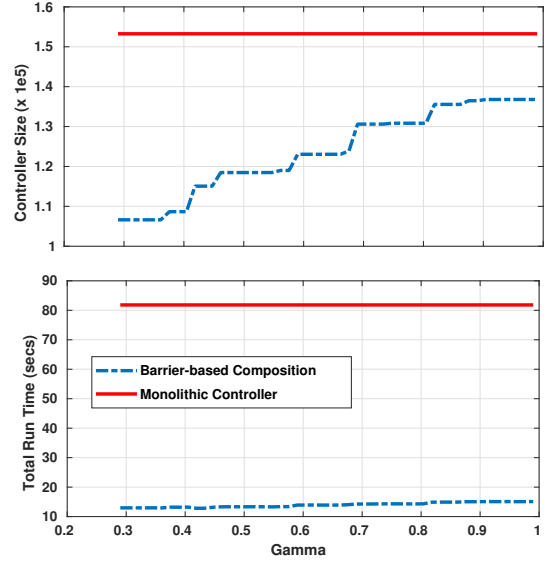}
    \caption{Variation of running time and controller size (measured by the number of allowed transitions) with $\gamma$ compared for proposed and monolithic controller synthesis approach.} 
    \label{fig:gammaInfluence}
\end{figure}

{\begin{theorem}\label{SubsetofContandDom}
{Given the symbolic model of a controlled system $\hat{\Sigma}|\hat{C}=(\hat X_C, \hat X^0_C, \hat U_C, \hat F_C)$, constructed as shown in Definition \ref{Def:SymbolicModel},
global safety specification $\Phi$ expressed in the form of barrier functions as shown in (\ref{safesymbol1}) - (\ref{safesymbol3}) and safety controllers $\hat{C}_{S_1}(\hat x)$ with $\alpha(r):=\gamma_1 r$ and $\hat{C}_{S_2}(\hat x)$ with $\alpha(r):=\gamma_2 r$, $\forall r\in \R^+$ as given in Definition \ref{safetycontroller}, 
where $\gamma_1,\gamma_2\in(0,1)$, if $\gamma_1>\gamma_2$,
then $\hat{C}_{S_2}(\hat x)\subseteq \hat{C}_{S_1}(\hat x)$ {for all $\hat{x} \in \hat{X}_C$} and $dom(\hat{C}_{S_2})\subseteq dom(\hat{C}_{S_1})$.} 
\end{theorem}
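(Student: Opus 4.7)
My plan is to exploit the observation that, on the symbolic safe set $\hat{S}$, the only $\gamma$-dependent part of Definition \ref{safetycontroller} is the right-hand side $-\alpha(B(c_{\hat{x}}) - \mathcal{L}^x \frac{\eta_{max}}{2})$ of the barrier inequality. Everything else in the definition of $\hat{C}_S(\hat{x})$ -- the admissible input set $\hat{U}^a_C(\hat{x})$, the transitions $\hat{F}(\hat{x},\hat{u})$, the values $B(c_{\hat{x}})$ and $B(c_{\hat{x}'})$, and the set $\hat{S}$ itself -- is identical for $\hat{C}_{S_1}$ and $\hat{C}_{S_2}$. Hence the whole argument reduces to a scalar monotonicity comparison: show that on $\hat{S}$, the $\gamma_1$-inequality is more permissive than the $\gamma_2$-inequality.

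The concrete steps I would carry out are as follows. First, fix $\hat{x} \in \hat{X}_C$ and set $r := B(c_{\hat{x}}) - \mathcal{L}^x \frac{\eta_{max}}{2}$. If $\hat{x} \in \hat{S}$, then by \eqref{safesymbol1} we have $r \geq 0$, and combined with $\gamma_1 > \gamma_2 > 0$ this yields $\gamma_1 r \geq \gamma_2 r$, so $-\gamma_1 r \leq -\gamma_2 r$. Second, take any $\hat{u} \in \hat{C}_{S_2}(\hat{x})$; by definition this input satisfies $\min_{\hat{x}' \in \hat{F}(\hat{x},\hat{u})}[B(c_{\hat{x}'}) - B(c_{\hat{x}})] \geq -\gamma_2 r \geq -\gamma_1 r$, which is exactly the membership condition for $\hat{C}_{S_1}(\hat{x})$, so $\hat{u} \in \hat{C}_{S_1}(\hat{x})$. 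This establishes the pointwise inclusion on $\hat{S}$. For $\hat{x} \notin \hat{S}$, the domain restriction in Definition \ref{safetycontroller} forces $\hat{x} \notin dom(\hat{C}_{S_j})$ for $j=1,2$, which I would use to make the inclusion vacuous (interpreting $\hat{C}_{S_j}(\hat{x}) = \emptyset$ there). The domain claim is then immediate: if $\hat{x} \in dom(\hat{C}_{S_2})$, then $\hat{x} \in \hat{S}$ and $\hat{C}_{S_2}(\hat{x}) \neq \emptyset$, whence by the pointwise inclusion $\hat{C}_{S_1}(\hat{x}) \supseteq \hat{C}_{S_2}(\hat{x}) \neq \emptyset$, so $\hat{x} \in dom(\hat{C}_{S_1})$.

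The proof is short because the result is essentially a monotonicity statement in one scalar parameter, so there is no substantial obstacle. The only genuinely delicate point -- the one I would be most careful about -- is the sign of $r$: the class-$\mathcal{K}_\infty$ function $\alpha$ is defined only on $\mathbb{R}^+_0$, and in fact the inequality $-\gamma_1 r \leq -\gamma_2 r$ reverses when $r < 0$. I would handle this by explicitly invoking the domain restriction $dom(\hat{C}_S) \subseteq \hat{S}$ from Definition \ref{safetycontroller}, which confines the meaningful content of the theorem to $\hat{S}$ and thereby justifies applying the monotonicity step only where $r \geq 0$.
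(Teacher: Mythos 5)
Your proof is correct, and it rests on the same key observation as the paper's --- that on $\hat{S}$ the quantity $r:=B(c_{\hat{x}})-\mathcal{L}^x\frac{\eta_{max}}{2}$ is nonnegative, so the right-hand side $-\gamma r$ of the barrier inequality is monotone nonincreasing in $\gamma$ and the $\gamma_1$-condition is implied by the $\gamma_2$-condition. Where you differ is in execution: you compare the two membership conditions directly at the level of inputs $\hat{u}$, which gives a two-line inclusion argument, whereas the paper rearranges the inequality into the form $\min_{\hat{x}'}B(c_{\hat{x}'})\geq(1-\gamma)B(c_{\hat{x}})+\gamma\mathcal{L}^x\frac{\eta_{max}}{2}$, introduces sets $E_1,E_2$ of allowable successor states for the two thresholds, and runs a three-case analysis to conclude $E_2\subset E_1$ before translating back to the controllers. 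Your route is preferable on two counts: (i) it correctly handles the boundary $r=0$ (i.e., $\hat{x}\in\partial\hat{S}$) by keeping the inequality non-strict, whereas the paper's derivation of its strict inequality between the two right-hand sides implicitly requires $r>0$; and (ii) it avoids the slip in the paper's final step, where the concluding inclusion is written as $\hat{C}_{S_1}(\hat{x})\subseteq\hat{C}_{S_2}(\hat{x})$ --- the reverse of what the case analysis actually supports and of what the theorem asserts. Your explicit treatment of $\hat{x}\notin\hat{S}$ via the domain restriction, and the resulting one-line deduction of $dom(\hat{C}_{S_2})\subseteq dom(\hat{C}_{S_1})$, is also tidier than the paper's appeal to ``the definition of the domain of controller.''
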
}

\begin{proof}
{The control barrier function-based controller, as defined in Definition \ref{safetycontroller} with $\alpha(r):=\gamma r$, $\forall r\in\R^+$ after rearrangement of terms is given by: 
\begin{equation}\label{rearrangecontrol}
\begin{split}
    &\hat{C}_S(\hat{x}):=\{\hat u\in\hat{U}^a_C(\hat{x})|\mathbf{min}_{\hat{x}'\in\hat{F}_C(\hat{x},\hat u)}[B(c_{\hat{x}'})]\\&\geq(1-\gamma)B(c_{\hat{x}})+\gamma\mathcal{L}^x\frac{\eta_{max}}{2}\},
\end{split}
\end{equation} for each $\hat{x}\in\hat{S}$. Consider two constants $\gamma_1$, $\gamma_2\in(0,1)$ and {with} $\gamma_1>\gamma_2$. After negating, multiplying with $B(c_{\hat{x}})-\mathcal{L}^x\frac{\eta_{max}}{2}$ and adding {$B(c_{\hat{x}})$} on both sides, 
\begin{equation*}
B(c_{\hat{x}})\hspace{-0.1em}-\gamma_1\hspace{-0.1em}\left(\hspace{-0.2em}B(c_{\hat{x}})\hspace{-0.2em}-\hspace{-0.2em}\mathcal{L}^x\frac{\eta_{max}}{2}\hspace{-0.2em}\right)\hspace{-0.2em}<\hspace{-0.2em} B(c_{\hat{x}})\hspace{-0.1em}-\gamma_2\hspace{-0.1em}\left(\hspace{-0.2em}B(c_{\hat{x}})\hspace{-0.2em}-\hspace{-0.2em}\mathcal{L}^x\frac{\eta_{max}}{2}\hspace{-0.2em}\right)\hspace{-0.2em},
\end{equation*}
\begin{equation}\label{ineqRHS}
    \begin{split}
        {(1\hspace{-0.1em}-\hspace{-0.1em}\gamma_1)B(c_{\hat{x}})\hspace{-0.2em}+\hspace{-0.2em}\gamma_1\mathcal{L}^x\frac{\eta_{max}}{2}} 
        < 
        {(1\hspace{-0.1em}-\gamma_2)B(c_{\hat{x}})\hspace{-0.2em}+\hspace{-0.2em}\gamma_2\mathcal{L}^x\frac{\eta_{max}}{2}}.
    \end{split}
\end{equation}
Let us now consider two sets $E_1$ and $E_2$ as defined below.
\begin{equation*}
\begin{split}
    E_1:=\{\hat{x}'\in\hat{F}_C(\hat{x},\hat u)|\mathbf{min_{\hat{x}'\in\hat{F}_C(\hat{x},\hat u)}}[B(c_{\hat{x}'})]\\\geq(1-\gamma_1)B(c_{\hat{x}})+\gamma_1\mathcal{L}^x\frac{\eta_{max}}{2}\},\\
\end{split}
\end{equation*}
\vspace{0.25em}
\begin{equation*}
\begin{split}
    E_2:=\{\hat{x}'\in\hat{F}_C(\hat{x},\hat u)|\mathbf{min_{\hat{x}'\in\hat{F}_C(\hat{x},\hat u)}}[B(c_{\hat{x}'})]\\\geq(1-\gamma_2)B(c_{\hat{x}})+\gamma_2\mathcal{L}^x\frac{\eta_{max}}{2}\},
\end{split}
\end{equation*} for each $\hat{x}\in\hat{S}$ and ${\forall} u\in\hat{U}^a_C(\hat{x})$.}
{\textbf{Case 1:} If $\hat{x}'\in E_1$ and $\hat{x}'\in E_2$,
\begin{equation*}\label{ineq1}
\begin{split}
    B(c_{\hat{x}'})\geq(1-\gamma_2)B(c_{\hat{x}})+\gamma_2\mathcal{L}^x\frac{\eta_{max}}{2}\\
    >(1-\gamma_1)B(c_{\hat{x}})+\gamma_1\mathcal{L}^x\frac{\eta_{max}}{2}.
\end{split}
\end{equation*}}

\textbf{Case 2:} If $\hat{x}'\in E_1$ and $\hat{x}'\notin E_2$,
\begin{equation*}\label{ineq2}
    \begin{split}
        (1-\gamma_1)B(\hat{x})+\gamma_1\mathcal{L}^x\frac{\eta_{max}}{2}\leq B(c_{\hat{x}'})\\<(1-\gamma_2)B(\hat{x})+\gamma_2\mathcal{L}^x\frac{\eta_{max}}{2}.
    \end{split}
\end{equation*}Both case 1 and case 2 holds true given (\ref{ineqRHS}).

{\textbf{Case 3:} If $\hat{x}'\notin E_1$ and $\hat{x}'\in E_2$,
\begin{equation*}\label{ineq3}
    \begin{split}
        (1-\gamma_2)B(\hat{x})+\gamma_2\mathcal{L}^x\frac{\eta_{max}}{2}\leq B(c_{\hat{x}'})\\<(1-\gamma_1)B(\hat{x})+\gamma_1\mathcal{L}^x\frac{\eta_{max}}{2},
    \end{split}
\end{equation*}
which contradicts (\ref{ineqRHS}). Hence, we conclude that $E_2\subset E_1$.

Consider controllers $\hat{C}_{S_1}$ and $\hat{C}_{S_2}$ constructed as shown in Definition \ref{safetycontroller} with $\alpha(r):=\gamma_1\times r$ and $\alpha(r):=\gamma_2\times r$, $\forall r\in\R^+$ respectively. The controllers $C_{S_1}$ and $C_{S_2}$ can be given as in (\ref{rearrangecontrol}) with $\gamma=\gamma_1$ and $\gamma=\gamma_2$ respectively. Given $E_2\subset E_1$ and (\ref{rearrangecontrol}), $\exists u\in\hat{U}^a_C(\hat{x})$ such that, $u\in C_{S_1}(\hat{x})$ but $u\notin C_{S_2}(\hat{x})$ since the set of $\hat{F}_C(\hat{x},u)$ allowable for $\hat{C}_{S_1}(\hat{x})$, defined by $E_1$, is larger. Thus, we have
    $\hat{C}_{S_1}(\hat{x})\subseteq\hat{C}_{S_2}(\hat{x})$.} From the definition of 
the domain of controller, one has $dom(\hat{C}_{S_2})\subseteq dom(\hat{C}_{S_1})$. This ends the proof.
\end{proof}
\begin{remark}
{The theorem can be generalized to any map $\alpha(r)$, even non-linear ones. From Theorem \ref{barrierproof}, $\alpha(r)<r$, $\forall r>0$. Thus, $\alpha(r)\leq\gamma r$, $\forall r>0$, where $\gamma\in(0,1)$ is the smallest value for which the condition holds. If $\alpha_1(r)>\alpha_2(r)$, $\alpha_1(r)\leq\gamma_1 r$, $\alpha_2(r)\leq\gamma_2 r$, $\forall r>0$ and $\gamma_1,\gamma_2\in(0,1)$ is the smallest value for which the conditions hold, then $\gamma_1>\gamma_2$, and the proof follows as shown.}
\end{remark}
\begin{remark}
The theorem above shows that with an increase in $\gamma$, the size and domain of the safety controller increase. At $\gamma=1$, the inequality is $\mathbf{max_{\hat{x}'\in\hat{F}_C(\hat{x},u)}}B(c_{\hat{x}'})\geq \mathcal{L}^x\frac{\eta_{max}}{2}$ which will lead to all input values $u\in\hat{U}^a_C$ leading to safe states being included in the controller giving us a fully permissible controller for safety. However, this is not possible as $\alpha(r)<r$, $\forall r\in\R^+$ as given in Theorem \ref{barrierproof}, which leads to $\gamma<1$ as $\alpha(r):=\gamma\times r$. This condition is necessary for the {control invariance} of the safe set.
\end{remark}
\begin{figure}[h]
    \centering
    \vspace{0.5em}
    \includegraphics[scale=0.4]{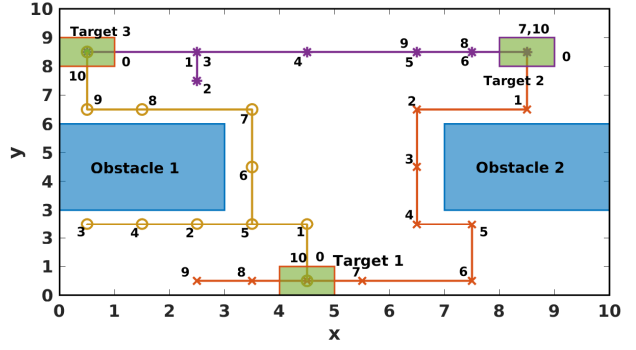}
    \vspace{-1.5em}
    \caption{Simulation for a multi-agent system with three agents}
    \label{fig:Sim}
\end{figure}
\section{Experimental Results}
We compare the proposed approach with a centralized controller synthesis technique. We simulate a discrete system where each agent is given by 
\begin{align}\label{mecanum}
\mathsf{x}_i(k+1)=\mathsf{x}_i(k)+\mathsf{u}_i(k),\text{ }i\in I,\text{ }k\in\N_0,
\end{align}
where $I=\{1,\ldots,N\}$, $\mathsf{x}_i(k)\in X=[0,10]\times[0,10]\subset \R^2$ is the state of the system and $\mathsf{u}_i(k)\in U=\{(-2,0),(-1,0),(1,0),(2,0),(0,-2),(0,-1),(0,1),(0,2)\}\\\subset \R^2$ is the input to the system. We ran two experiments with $N=2$ and $N=3$.
The global safety specification is given by a set of pair-wise CBFs as
\begin{gather}\label{globalspec}
    B_{ij}(x)=\lVert x_i-x_j\rVert-d_{ij},\text{ }\forall i,j\in I\text{ and }i\neq j,
\end{gather} where $d_{ij}=3$ is the desired distance between the agents $i$ and $j$. The local LTL specification for agent $i$ is given by $\psi_i=\lozenge Target_i\wedge(\Box\neg (Obs_1\lor Obs_2))$, where $Target_i$ is the target of agent $i$, $Obs_1$ and $Obs_2$ are the obstacles in the state-space, $\Box$ and $\lozenge $ represent temporal operators always and eventually, respectively. We used a computer with AMD Ryzen 9 5950x, 128 GB RAM, and NVIDIA RTX 3080Ti graphics card to perform simulations that were run on MATLAB. The state quantization parameter is $\eta=[1,1]$. 

\begin{table}[]
\centering
\caption{Computation time comparison}
\label{compare}
\begin{tabular}{@{}clll@{}}
\toprule
Number of & \multicolumn{3}{c}{Computation Time (secs)}         \\ \cmidrule(l){2-4} 
Robots    & Monolithic           & Proposed & Percent reduction \\ \midrule
2         & 170.24               & 32.11    & 81.25             \\
3         & \textgreater 4 weeks & 85985.05 & \textgreater 96.5 \\ \bottomrule
\end{tabular}
\end{table}
{Table \ref{compare} shows the synthesis time of the proposed technique compared to that of the monolithic approach. For the $2$ agent example, the proposed technique took $32.082$ secs for controller synthesis, $81.25\%$ lower than the $170.2421$ secs taken by the monolithic approach. For $3$ agents, the proposed technique took  $85985.05$ secs for synthesis, while the monolithic approach did not finish synthesizing after $4$ weeks, raising the reduction in time to more than $96.5\%$. The proposed technique gets progressively faster compared to the monolithic approach with increased state-space dimensions.}

{We synthesize the controller using our approach for different values of $\gamma$ for $\alpha(r)=\gamma r$ in Definition \ref{safetycontroller}. As indicated by Theorem \ref{SubsetofContandDom}, the number of transitions allowed by the controller increased with higher $\gamma$ values, as illustrated in Figure \ref{fig:gammaInfluence}(top). We also depicted a $\gamma$ vs synthesis time graph in Figure \ref{fig:gammaInfluence}(bottom), demonstrating that increasing $\gamma$ led to a gradual rise in synthesis time. Consequently, we can obtain a less conservative controller with a high $\gamma$ value without significantly compromising synthesis time.} 

Figure \ref{fig:Sim} shows the simulation of a three-agent system with a local specification of avoiding the obstacles, Obs $1$ and Obs $2$, and reaching the corresponding targets while ensuring the global safety specification. 
One can see that the MAS satisfies both local and global specifications. The orange line shows the trajectory of agent $1$; the purple line shows the trajectory of agent $2$; and the yellow line shows the trajectory of agent $3$. 
Figure \ref{fig:Distance} shows the distance between the agents as they move in the arena. The grey horizontal line shows the lower bound on the distance between agents. The graph clearly shows that all three agents never get closer than three units to each other, satisfying the global specification. 

We have implemented our algorithm in a real-world setup comprising of two heterogeneous agents: Mecanum drive, given by \eqref{mecanum}, and Omni-directional, given by $\dot x = u\cos{\theta}+v\sin{\theta}, \dot y=u\sin{\theta}-v\cos{\theta}, \dot\theta=\omega$,
where $(x, y) \text{ and } \theta$ are the robot's position and orientation respectively. The control inputs, $u$, $v$, and $\omega$ are linear velocities and angular velocity, respectively. The demonstrations can be found here \href{https://youtu.be/GtmgvIhH93o}{https://youtu.be/GtmgvIhH93o}.


\begin{figure}[t]
    \centering
    \includegraphics[scale=0.45]{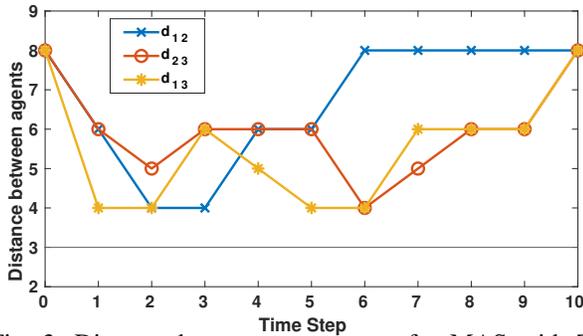}
    \vspace{-1.0em}
    \caption{Distance between two agents for MAS with Three Agents. The grey line is lower bound on the safe distance.} 
    \vspace{-0.75em}
    \label{fig:Distance}
\end{figure}



\section{Conclusion}
{We proposed a three-step bottom-up symbolic approach for MAS by combining symbolic control and CBFs. We have proven the correctness of our approach and have formally shown that the final controlled multi-agent system satisfies both local LTL and global safety specifications. We also formally analysed the conservatism of our barrier function-based approach by analyzing the effect of the control barrier function parameters on the size and domain of the controller. We have also demonstrated the influence of these parameters on synthesis time and implemented our controller synthesis algorithm in a simulation as well as a real-world heterogeneous system, which shows the practicality of our approach.}





\bibliographystyle{ieeetr} 
\bibliography{sources} 

\end{document}